\begin{document}

\title*{Random Market Models with an $H$-Theorem}
\author{R. L\'opez-Ruiz, E. Shivanian and J.L. L\'opez}
\institute{Ricardo L\'opez-Ruiz \at Dept. of Computer Science \& BIFI, University of Zaragoza, Zaragoza, Spain \email{rilopez@unizar.es}
\and Elyas Shivanian \at Dept. of Mathematics, Imam Khomeini International University, Qazvin, Iran  \email{shivanian@ikiu.ac.ir}
\and Jos\'e Luis L\'opez \at Dept. of Math. Engineering, Public University of Navarre, Pamplona, Spain \email{jl.lopez@unavarra.es}}
\maketitle

\abstract*{In this communication, some economic models given by functional mappings are addressed.
These are models for random markets where agents trade by pairs and exchange their money in a 
random and conservative way.
They display the exponential wealth distribution as asymptotic equilibrium,
independently of the effectiveness of the transactions and of the limitation of the total wealth. 
The entropy increases with time in these models and the existence of an H-theorem is 
computationally checked. Also, it is shown that any small perturbation of the models equations
make them to lose the exponential distribution as an equilibrium solution. }

\abstract{In this communication, some economic models given by functional mappings are addressed.
These are models for random markets where agents trade by pairs and exchange their money in a 
random and conservative way.
They display the exponential wealth distribution as asymptotic equilibrium,
independently of the effectiveness of the transactions and of the limitation of the total wealth. 
The entropy increases with time in these models and the existence of an H-theorem is 
computationally checked. Also, it is shown that any small perturbation of the models equations 
make them to lose the exponential distribution as an equilibrium solution. }

\section{Introduction} 
\label{intro}

In the last years, it has been reported \cite{dragu2001,chakra2010} that in western societies,
around the 95\% of the population, the middle and lower economic classes of society arrange 
their incomes in an exponential wealth distribution. The incomes of the rest of the population, 
around the 5\% of individuals, fit a power law distribution \cite{solomom97}.

A kind of models considering the randomness associated to markets are the gas-like models \cite{yakoven2009}.
These random models interpret economic exchanges of money between agents similarly 
to collisions in a gas where particles share their energy \cite{dragu2000}.

In this communication, we consider a continuous version of an homogeneous gas-like model \cite{lopezruiz2011,lopez2012} 
which we generalize to a situation where the agents present a control parameter to decide the degree of interaction with
the rest of economic agents \cite{lopezruiz2013} and also to another new situation where there is an upper limit of 
the total richness.

The appearance of the exponential (Gibbs) distribution as a fixed point for all these three cases
is mathematically explained \cite{lopez2012}. Also, the increasing of the entropy when these systems evolve toward
the asymptotic equilibrium is checked. This is associated with the existence of an $H$-theorem for all these economic 
models \cite{shivanian2012,apenko2013}.
Despite their apparent simplicity, these models  based on functional mappings can help to enlighten the reasons 
of the ubiquity of the exponential distribution in many natural phenomena but in particular in the random markets.

\section{The continuous gas-like model}
\label{sec:1}

We consider an ensemble of economic agents trading with money by pairs in a random manner.
The discrete version of this model is as follows \cite{dragu2000}.
For each interacting pair $(m_i,m_j)$ of the ensemble of $N$ economic
agents the trading rules can be written as
\begin{eqnarray*}
m'_i &=& \epsilon \; (m_i + m_j), \nonumber\\
m'_j &=& (1 - \epsilon) (m_i + m_j), \label{model1}\\
i , j &=& 1 \ldots N, \nonumber
\end{eqnarray*}
where $\epsilon$ is a random number in the interval $(0,1)$.
The agents $(i,j)$ are randomly chosen. Their initial money $(m_i, m_j)$,
at time $t$, is transformed after the interaction in $(m'_i, m'_j)$ at time $t+1$.
The asymptotic distribution $p_f(m)$, obtained by numerical simulations,
is the exponential (Boltzmann-Gibbs) distribution,
\begin{displaymath}
p_f(m)=\beta \exp(-\beta \,m), \hspace{0.5cm}\hbox{with}\hspace{0.5cm}\beta={1/ <m>_{gas}}, 
\label{eq-exp}
\end{displaymath}
where $p_f(m) {\mathrm d}m$ denotes the PDF ({\it probability density function}), i.e.
the probability of finding an agent with money (or energy in a gas system) between
$m$ and $m + {\mathrm d}m$.
Evidently, this PDF is normalized, $\vert\vert p_f\vert\vert=\int_0^{\infty} p_f(m){\mathrm d}m=1$.
The mean value of the wealth, $<m>_{gas}$, can be easily calculated directly from the gas
by $<m>_{gas}=\sum_i m_i/N$.

The continuous version of this model \cite{lopezruiz2011} considers the evolution of
an initial wealth distribution $p_0(m)$ at each time step $n$ under the action of an operator $T$.
Thus, the system evolves from time $n$ to time $n+1$ to asymptotically
reach the equilibrium distribution $p_f(m)$, i.e.
\begin{displaymath}
\lim_{n\rightarrow\infty} {T}^n \left(p_0(m)\right) \rightarrow p_f(m). 
\label{eq-operT}
\end{displaymath}
In this particular case, $p_f(m)$ is the exponential distribution with the same
average value, $<p_f>$, than the initial one, $<p_0>$,
due to the local and total richness conservation.

The derivation of the operator $T$ is as follows \cite{lopezruiz2011}.
Suppose that $p_n$ is the wealth distribution in the ensemble at time $n$.
The probability to have a quantity of money $x$ at time $n+1$ will be the sum of the
probabilities of all those pairs of agents $(u,v)$ able to
produce the quantity $x$ after their interaction, that is, all the pairs verifying $u+v>x$.
Thus, the probability that two of these agents with money $(u,v)$ interact between them is
$p_n(u)*p_n(v)$. Their exchange is totally random and then they can give rise with equal
probability to any value $x$ comprised in the interval $(0,u+v)$. Therefore, the probability
to obtain a particular $x$ (with $x<u+v$) for the interacting pair $(u,v)$ will be
$p_n(u)*p_n(v)/(u+v)$.  Then, $T$ has the form of a nonlinear integral operator,
\begin{displaymath}
p_{n+1}(x)={T}p_n(x) = \int\!\!\int_{u+v>x}\,{p_n(u)p_n(v)\over u+v}
\; {\mathrm d}u{\mathrm d}v \,.   
\label{eq-T}
\end{displaymath}

If we suppose $T$ acting in the PDFs space, it has been proved \cite{lopez2012}
that $T$ conserves the mean wealth of the system, $<Tp>=<p>$. It also conserves
the norm ($\vert\vert \cdot\vert\vert$), i.e. $T$ maintains the total number of agents
of the system, $\vert\vert T p\vert\vert=\vert\vert p\vert\vert=1$, that
by extension implies the conservation of the total richness of the system.
We have also shown that the exponential distribution $p_f(x)$ with the right average value
is the only steady state of $T$, i.e. $T p_f=p_f$. Computations also seem to suggest
that other high period orbits do not exist.
In consequence, it can be argued that the convergence relation toward the limit point 
above explained is true. We sketch some of these properties.

First, in order to set up the adequate mathematical framework,
we provide the following definitions.

\begin{definition}
We introduce the space $L_1^+$ of positive functions (wealth distributions)
in the interval $[0,\infty)$,
$$
L_1^+[0,\infty)=\lbrace y:[0,\infty)\to R^+\cup\lbrace0\rbrace,
\hskip 2mm \vert\vert y\vert\vert<\infty\rbrace,
$$
with norm
$$
\vert\vert y\vert\vert=\int_0^\infty y(x) dx.
$$
\end{definition}

\begin{definition}
\label{def-mean1}
We define the mean richness $<x>_y$ associated to a wealth distribution $y\in L_1^+[0,\infty)$ as
the mean value of $x$ for the distribution $y$. Then,
$$
<x>_y = \vert\vert xy(x)\vert\vert=\int_0^\infty xy(x) dx.
$$
\end{definition}

\begin{definition}
\label{def-region}
For $x\ge 0$ and $y\in L_1^+[0,\infty)$ the action of operator $T$ on $y$ is defined by
$$
T(y(x))=\int\int_{S(x)} dudv{y(u)y(v)\over u+v},
$$
with $S(x)$ the region of the plane representing the pairs of agents $(u,v)$ which can
generate a richness $x$ after their trading, i.e.
$$
S(x)=\lbrace (u,v), \hskip 2mm u,v>0,\hskip 2mm u+v>x\rbrace.
$$
\end{definition}

Now, we remind the following results recently presented in Ref. \cite{lopez2012,lopezruiz2013}.

\begin{theorem}
\label{teorema-norma}
For any $y\in L_1^+[0,\infty)$ we have that
$\vert\vert Ty\vert\vert=\vert\vert y\vert\vert^2$.
In particular, consider the subset of PDFs in $L_1^+[0,\infty)$, i.e. the unit sphere
$B=\lbrace y\in L_1^+[0,\infty)$, $\vert\vert y\vert\vert=1\rbrace$. Observe that
if $y\in B$ then $Ty\in B$. (It means that the number of agents in the economic system is conserved in time).
\end{theorem}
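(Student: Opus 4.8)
The plan is to compute $\|Ty\|$ directly, by writing it as an iterated integral over a region in $\mathbb{R}^3$ and swapping the order of integration. Starting from the definitions,
$$
\|Ty\| = \int_0^\infty Ty(x)\,dx = \int_0^\infty\!\left(\int\!\!\int_{S(x)} \frac{y(u)y(v)}{u+v}\,du\,dv\right)dx,
$$
which I would read as the integral of the nonnegative function $y(u)y(v)/(u+v)$ over the set $\{(x,u,v):\ u,v>0,\ 0<x<u+v\}$. The key observation is that, for fixed $(u,v)$ with $u,v>0$, the slice in the $x$-variable is simply the interval $(0,u+v)$.

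Next I would apply Tonelli's theorem (licit because the integrand is nonnegative and measurable on $[0,\infty)^3$) to integrate first in $x$. The inner integral is $\int_0^{u+v} dx = u+v$, which cancels the denominator exactly, leaving
$$
\|Ty\| = \int_0^\infty\!\!\int_0^\infty y(u)\,y(v)\,du\,dv = \left(\int_0^\infty y(u)\,du\right)\!\left(\int_0^\infty y(v)\,dv\right) = \|y\|^2 .
$$
This establishes the first assertion. The second is then immediate: if $y\in B$, so that $\|y\|=1$, then $Ty$ is nonnegative by construction, lies in $L_1^+[0,\infty)$, and $\|Ty\|=\|y\|^2=1$, hence $Ty\in B$; in particular the total number of agents is preserved.

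The only delicate point — and it is a mild one — is the justification of the interchange of integrations together with the verification that $Ty$ is a genuine element of $L_1^+[0,\infty)$ (measurability and almost-everywhere finiteness). Both follow from the same application of Tonelli to the nonnegative measurable integrand: the theorem guarantees that $x\mapsto Ty(x)$ is measurable, and the computed identity $\|Ty\|=\|y\|^2<\infty$ (finite because $y\in L_1^+$) shows a posteriori that $Ty(x)<\infty$ for almost every $x$. I would state this explicitly so that no step relies implicitly on finiteness before it has been established.
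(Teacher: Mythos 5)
Your proof is correct and follows essentially the same route as the paper: the statement itself is only quoted here from Ref.~\cite{lopez2012}, but the proof the authors give for the analogous norm identity $\vert\vert T_\Lambda y\vert\vert=\vert\vert y\vert\vert^2$ is exactly your computation --- rewrite the norm as a triple integral, integrate first in $x$ over $(0,u+v)$ so that the factor $u+v$ cancels the denominator, and factor the remaining double integral into $\vert\vert y\vert\vert^2$. Your explicit appeal to Tonelli to justify the interchange and the measurability/finiteness of $Ty$ is a welcome refinement that the paper leaves implicit.
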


\begin{theorem}
The mean value $<x>_y$ of a PDF $y$ is conserved, that is $<x>_{Ty}=<x>_y$ for any $y\in B$.
(It means that the mean wealth, and by extension the total richness, of the economic
system are preserved in time).
\end{theorem}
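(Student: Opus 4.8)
The plan is to evaluate $<x>_{Ty}$ by a direct computation: unfold the definition of $T$, turn the expression into an iterated integral, and swap the order of integration. By Definition~\ref{def-mean1} and Definition~\ref{def-region},
$$
<x>_{Ty}=\int_0^\infty x\,(Ty)(x)\,dx=\int_0^\infty x\left(\int\!\!\int_{u+v>x}\frac{y(u)\,y(v)}{u+v}\,du\,dv\right)dx .
$$
The integrand $x\,y(u)\,y(v)/(u+v)$ is nonnegative and measurable on the set $\{(x,u,v):x>0,\ u,v>0,\ u+v>x\}$, so Tonelli's theorem applies and I may integrate first in $x$, for fixed $(u,v)$, over the interval $0<x<u+v$.

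First I would carry out the inner $x$-integral, $\int_0^{u+v}x\,dx=(u+v)^2/2$; substituting this back, the kernel factor $1/(u+v)$ cancels one power of $u+v$, leaving
$$
<x>_{Ty}=\frac12\int_0^\infty\!\!\int_0^\infty (u+v)\,y(u)\,y(v)\,du\,dv .
$$
Next I would split $u+v$ into its two summands and factor each of the two resulting double integrals into a product of one-dimensional integrals, recognising $\int_0^\infty u\,y(u)\,du=<x>_y$ and $\int_0^\infty y(v)\,dv=\vert\vert y\vert\vert$ (and the symmetric pair). This yields $<x>_{Ty}=\frac12\big(<x>_y\,\vert\vert y\vert\vert+\vert\vert y\vert\vert\,<x>_y\big)=\vert\vert y\vert\vert\,<x>_y$.

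Finally, specialising to $y\in B$ gives $\vert\vert y\vert\vert=1$ and hence $<x>_{Ty}=<x>_y$; moreover Theorem~\ref{teorema-norma} already ensures $Ty\in B$, so the left-hand side is a genuine mean richness in the sense of Definition~\ref{def-mean1}. The only point needing care is the interchange of the $x$- and $(u,v)$-integrations, but since everything in sight is nonnegative this is precisely the hypothesis of Tonelli's theorem, and the identity $<x>_{Ty}=\vert\vert y\vert\vert\,<x>_y$ in fact holds in $[0,+\infty]$ with no integrability assumption at all; in particular a PDF of finite mean is sent to a PDF of the same finite mean. I therefore expect no real obstacle beyond this routine justification.
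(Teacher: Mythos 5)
Your proof is correct and follows essentially the same route the paper uses: the paper omits the proof for $T$ (citing earlier work) but proves the analogous conservation statement for $T_\Lambda$ by exactly this computation --- swap the order of integration, evaluate $\int_0^{u+v}x\,dx=(u+v)^2/2$ to cancel the kernel's $1/(u+v)$, split $u+v$ into its summands, and factor into $\Vert y\Vert\,\langle x\rangle_y$. Your explicit appeal to Tonelli to justify the interchange is a sound (and slightly more careful) addition.
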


\begin{theorem}
Apart from $y=0$, the one-parameter family of functions
$y_\alpha(x)= \alpha e^{-\alpha x}$, $\alpha>0$, are the unique fixed points
of $T$ in the space $L_1^+[0,\infty)$.
\label{teorema-unicidad}
\end{theorem}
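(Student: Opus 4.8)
The plan is to reduce the fixed-point equation to an explicitly solvable first-order ODE for the Laplace transform of $y$. Since a nonzero fixed point $y=Ty$ satisfies $\|y\|=\|Ty\|=\|y\|^2$ by Theorem \ref{teorema-norma}, it is automatically a PDF, $\|y\|=1$. The first move is to rewrite $T$ in a convenient form: passing to coordinates $(u,w)$ with $w=u+v$ (Jacobian $1$) turns the region $S(x)$ into $\{0<u<w,\ w>x\}$, so that
\[
Ty(x)=\int_x^\infty \frac{(y\ast y)(w)}{w}\,dw,
\qquad (y\ast y)(w):=\int_0^w y(u)\,y(w-u)\,du .
\]
Because $\|y\ast y\|=\|y\|^2<\infty$ and $Ty(0)$ is finite for a fixed point, the function $w\mapsto (y\ast y)(w)/w$ lies in $L_1(0,\infty)$. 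Hence the identity $y=Ty$ forces $y$ to be absolutely continuous and nonincreasing on $[0,\infty)$, with $y(\infty)=0$, $y(0^+)$ finite, and
\[
x\,y'(x)=-(y\ast y)(x)
\]
for almost every $x>0$.

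Next I would take Laplace transforms. Writing $\hat y(s)=\int_0^\infty e^{-sx}y(x)\,dx$ (finite, bounded by $1$, and smooth for $s>0$), integration by parts gives $\mathcal L[y'](s)=s\hat y(s)-y(0^+)$, hence $\mathcal L[x\,y'(x)](s)=-\frac{d}{ds}\big(s\hat y(s)\big)=-\hat y(s)-s\hat y'(s)$; and by the convolution theorem $\mathcal L[(y\ast y)](s)=\hat y(s)^2$. The displayed relation therefore becomes the separable ODE
\[
s\,\hat y'(s)=\hat y(s)^2-\hat y(s)=\hat y(s)\big(\hat y(s)-1\big),\qquad s>0 .
\]
Using $\frac{1}{\hat y(\hat y-1)}=\frac{1}{\hat y-1}-\frac{1}{\hat y}$ this integrates to $\log\frac{\hat y(s)-1}{\hat y(s)}=\log s+C$ for a constant $C$, i.e. $\hat y(s)=\frac{1}{1-Ks}$. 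Boundedness of $\hat y$ on $[0,\infty)$ excludes $K>0$; $K=0$ would give $\hat y\equiv 1$, i.e. $y=\delta_0\notin L_1^+$; so $K=-1/\alpha$ with $\alpha>0$ and $\hat y(s)=\frac{\alpha}{\alpha+s}$. By injectivity of the Laplace transform, $y(x)=\alpha e^{-\alpha x}=y_\alpha(x)$. Conversely $(y_\alpha\ast y_\alpha)(w)=\alpha^2 w\,e^{-\alpha w}$, so $Ty_\alpha(x)=\int_x^\infty \alpha^2 e^{-\alpha w}\,dw=\alpha e^{-\alpha x}$, confirming that the $y_\alpha$ are exactly the nonzero fixed points.

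I expect the regularity bootstrap in the first step to be the main obstacle: one must argue carefully that an $L_1^+$ solution of $y=Ty$ is in fact absolutely continuous with the pointwise identity $x\,y'(x)=-(y\ast y)(x)$ holding a.e., that $y(0^+)$ is finite and $y(\infty)=0$ (so the boundary term in the integration by parts is precisely $-y(0^+)$), and that differentiation under the integral sign defining $\hat y$ and inversion of the Laplace transform are all legitimate on $(0,\infty)$. Once these analytic points are settled, the remainder is the elementary ODE computation above.
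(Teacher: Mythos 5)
The paper itself contains no proof of this theorem---it is quoted from Ref.~\cite{lopez2012}---but your reduction via the Laplace transform to the ODE $s\,\hat y'(s)=\hat y(s)\bigl(\hat y(s)-1\bigr)$ is essentially the argument of that reference, and your computations (the change of variables giving $Ty(x)=\int_x^\infty (y\ast y)(w)\,w^{-1}\,dw$, the integration of the ODE to $\hat y(s)=1/(1-Ks)$, the exclusion of $K\ge 0$, and the converse check) are all correct. The one loose end is exactly the point you flag yourself (absolute continuity of $y$ and finiteness of $y(0^+)$); it can be bypassed entirely by Laplace-transforming the \emph{undifferentiated} fixed-point identity $y(x)=\int_x^\infty (y\ast y)(w)\,w^{-1}\,dw$, for which Tonelli and dominated convergence give $\frac{d}{ds}\bigl(s\,\hat y(s)\bigr)=\hat y(s)^2$ directly and hence the same ODE, with no regularity hypotheses on $y$ beyond $y\in L_1^+[0,\infty)$.
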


\begin{proposition} 
For some members $y,w\in B$,
        $\vert\vert Ty-Tw\vert\vert\geq\vert\vert y-w\vert\vert$, hence $T$ is not a contraction.
\end{proposition}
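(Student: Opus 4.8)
The plan is to disprove the contraction property by exhibiting an explicit pair on which $T$ does not shrink distances, and the cleanest source of such a pair is the one-parameter family of fixed points supplied by Theorem~\ref{teorema-unicidad}. First I would check that for every $\alpha>0$ the exponential density $y_\alpha(x)=\alpha e^{-\alpha x}$ belongs to $B$: it is non-negative on $[0,\infty)$ and $\|y_\alpha\|=\int_0^\infty \alpha e^{-\alpha x}\,dx=1$, so $y_\alpha\in L_1^+[0,\infty)$ with unit norm. By Theorem~\ref{teorema-unicidad} each $y_\alpha$ is a fixed point, $Ty_\alpha=y_\alpha$.

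Next I would simply take two distinct parameters, say $y=y_\alpha$ and $w=y_\beta$ with $\alpha\neq\beta$. Because both are fixed by $T$,
$$
\|Ty-Tw\|=\|y_\alpha-y_\beta\|=\|y-w\|,
$$
which already gives the asserted inequality $\|Ty-Tw\|\ge\|y-w\|$ (indeed with equality). Since $\alpha\neq\beta$ the two densities are not identically equal, so $\|y-w\|>0$; hence no constant $k\in[0,1)$ can satisfy $\|Ty-Tw\|\le k\,\|y-w\|$ for all admissible pairs, i.e. $T$ is not a contraction on $B$. Equivalently, one may invoke Banach's fixed point theorem: $B$ is a closed subset of the complete space $L^1[0,\infty)$, so a contraction $T:B\to B$ would have a \emph{unique} fixed point, contradicting the whole family produced by Theorem~\ref{teorema-unicidad}.

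There is essentially no obstacle here — the point is precisely that the existence of more than one fixed point is incompatible with being a contraction — so the only items needing a line of verification are that the chosen $y_\alpha,y_\beta$ lie in $B$ and are genuinely distinct, both immediate. If one wished for the stronger fact that $T$ is actually \emph{expansive} somewhere (distance ratio strictly above $1$), I would instead perturb a single fixed point, writing $y=y_\alpha+\delta h$ with $h$ a small norm- and mean-preserving perturbation, and analyse the linearisation $DT_{y_\alpha}$; showing its spectral radius exceeds $1$ would suffice, but this refinement is not needed for the proposition as stated.
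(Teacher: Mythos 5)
Your proof is correct, but it takes a genuinely different route from the paper's. The paper substantiates the proposition with an explicit numerical example: it takes $y(x)=1/(1+x)^2$ and $w(x)=e^{-x}$, both in $B$, and computes (via Mathematica) $\Vert y-w\Vert=0.407264$ versus $\Vert Ty-Tw\Vert=0.505669$, exhibiting a pair on which $T$ strictly \emph{expands} the $L^1$ distance. You instead exploit the one-parameter family of fixed points from Theorem~\ref{teorema-unicidad}: since $Ty_\alpha=y_\alpha$ and $Ty_\beta=y_\beta$ with $y_\alpha\neq y_\beta$ for $\alpha\neq\beta$, you get $\Vert Ty_\alpha-Ty_\beta\Vert=\Vert y_\alpha-y_\beta\Vert>0$, which already rules out any contraction constant $k<1$; the Banach-uniqueness argument you mention is an equivalent packaging of the same observation. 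Your route is fully rigorous and requires no numerics, which is a genuine advantage over the paper's computer-assisted example; what the paper's example buys in exchange is the strictly stronger information that $T$ is not even nonexpansive on $B$ (the ratio exceeds $1$), something your fixed-point pair, which only achieves equality, cannot show. It is worth noting that your argument also explains why the non-contractivity is consistent with the later result that $T$ \emph{is} a contraction on the fixed-mean slice $B_{x_0}$: that slice contains only one member of the exponential family, so your obstruction disappears there.
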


\begin{example}
Take $y(x)={1\over(1+x)^2}$ and $w(x)=e^{-x}$ which belong to $B$. By using Mathematica,
it is seen that $\vert\vert y-w\vert\vert=0.407264$ and $\vert\vert Ty-Tw\vert\vert=0.505669$.
\end{example}

If we consider the restriction of $T$ for the subset $B_{x_0}$ of distributions with the same mean wealth $x_0$,
i.e. $B_{x_0}=\{y\in B| <x>_y=x_0\}$, then by using the Laplace transform of the operator $T$,
it has been proved \cite{gutriel2013} that $T$ is a contraction in  $B_{x_0}$, 
hence the truth of the following relation:
$$
\lim_{n\rightarrow\infty} {T}^ny(x)=\left\{\begin{array}{lcl}
\delta e^{-\delta x} & & with \hskip 5mm \delta=1/x_0\,,\\
& or & \\
0^+ & & when \hskip 3mm <x>_y=+\infty\,.
\end{array}\right.
$$

\vskip 0.2cm
Let us observe that the above pointwise limit of $T^ny$ when $n\to\infty$ can be outside of $B$
in the case that $<x>_y=+\infty$. See the next example.

\begin{example}
Take $y(x)={1\over(1+x)^2}$ which belongs to $B$, with $<x>_y=+\infty$.
Evidently, $T^ny\in B$ for all $n$. But it can be seen that $\lim_{n\rightarrow\infty} {T}^ny(x)=0^+\notin B$.
\label{example-pareto}
\end{example}

\begin{example}
Assume now the rectangular distribution:
$y(x)={1\over 2}$ if $2<x<4$, and $y(x)=0$ otherwise.
So, $y\in B$ and $\delta={1\over 3}$,
then the steady state in this case is $\mu(x)={1\over 3}e^{-{1\over 3}x}$. 
We find numerically that $\vert\vert y-\mu\vert\vert > \vert\vert Ty-\mu\vert\vert 
> \vert\vert T^2y-\mu\vert\vert > \vert\vert T^3y-\mu\vert\vert$, and so on.
It is shown in Fig. \ref{fig-y-rectangular}.
Then we can guess that $\lim_{n\rightarrow\infty}\vert\vert T^ny-\mu\vert\vert=0$.
\end{example}

\begin{figure}[h]
\begin{center}
\psfrag{B}{} \psfrag{A}{\large\scriptsize (a)}
\includegraphics[width=1.5in, height=1.3in]{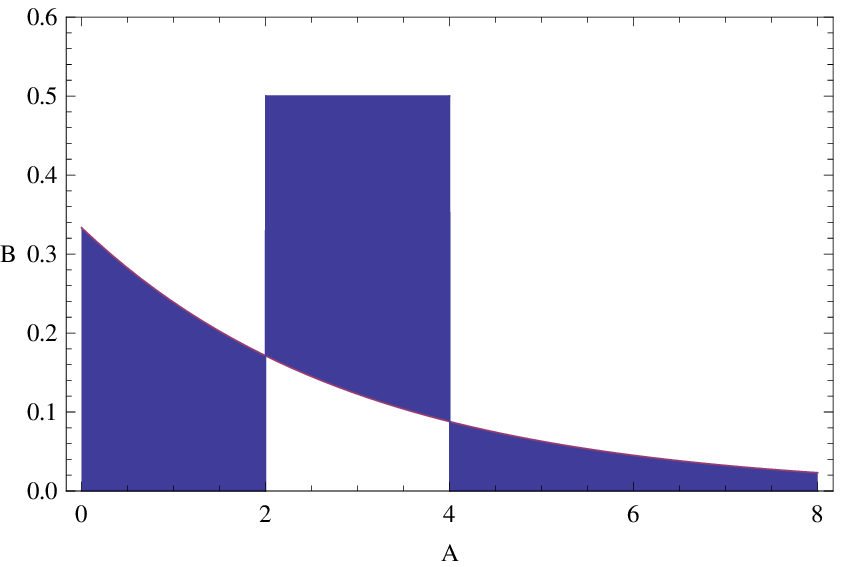} \hskip 2 mm
\psfrag{B}{} \psfrag{A}{\large\scriptsize (b)}
\includegraphics[width=1.5in, height=1.3in]{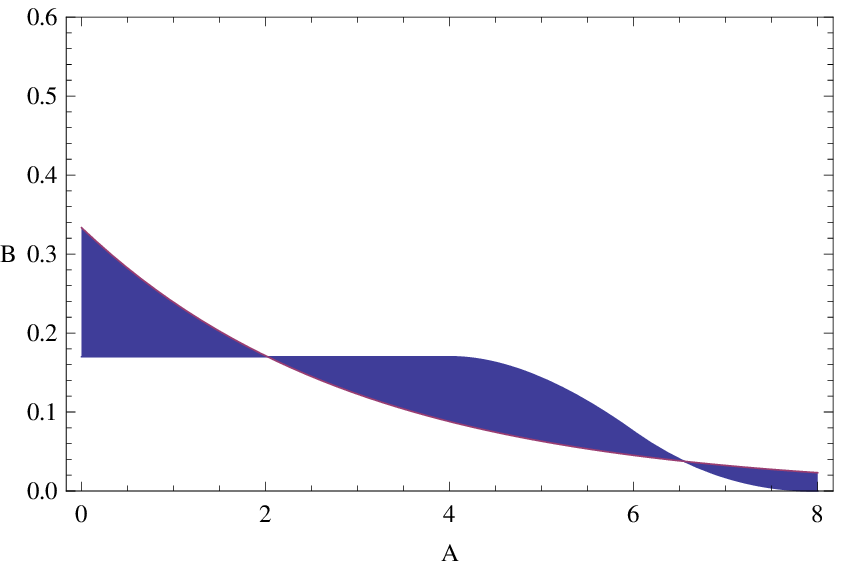} \hskip 2 mm
\psfrag{B}{} \psfrag{A}{\large\scriptsize (c)}
\includegraphics[width=1.5in, height=1.3in]{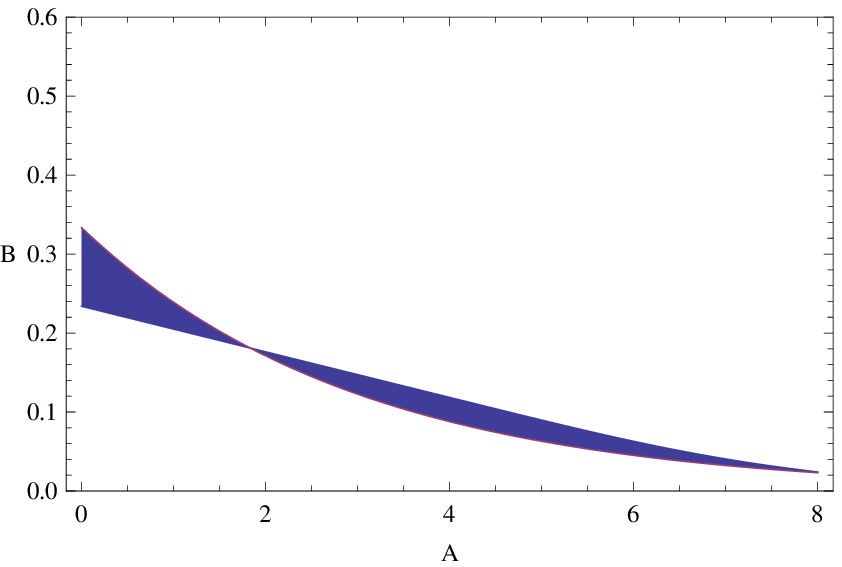}
\caption{Plot of $y(x)={1\over 2}$ if $2<x<4$, and $y(x)=0$ otherwise,
$T$-iterates of $y$ and $\mu(x)={1\over 3}e^{-{1\over 3}x}$.
(a) $\vert\vert y-\mu\vert\vert$, (b) $\vert\vert Ty-\mu\vert\vert$,
(c) $\vert\vert T^2y-\mu\vert\vert$.}
\label{fig-y-rectangular}
\end{center}
\end{figure}

If we consider the entropy of $y(x)$ given by $H=-\int y(x)\log y(x)dx$, then it is found 
that $H$ increases in a monotonic way when $T$ is successively applied to an initial state $y_0(x)$. 
If we define $H_n=H(T^ny_0(x))$, then the following $H$-theorem \cite{apenko2013} yields:
$$
\lim_{n\rightarrow\infty} H_n = H(\delta e^{-\delta x}) \hskip 0.5cm \hbox{with} \hskip 5mm \delta=1/<x>_{y_0}\,, 
$$
$$
\hskip 0.3cm  H_n\leq H_{n+1}  \hskip 5mm \forall n \,. \hskip 3cm
$$

Summarizing, the system has a fixed point, $\delta e^{-\delta x}$, which is
asymptotically reached depending on the initial average value $<x>_{y_0}$ and
following a trajectory of increasing entropy. This behavior is essentially maintained 
in the extension of this model for other similar random markets (see the next sections).

\section{The continuous gas-like model with homogeneous effectiveness}

Let us think now that many of the economical transactions planned in markets are
not successful and they are finally frustrated. It means that markets are not totally
effective. We can reflect this fact in our model in a qualitative way by defining a
parameter $\lambda\in [0,1]$ which indicates the {\it degree of effectiveness} of the
random market. When $\lambda=1$ the market will have total effectiveness and all the
operations will be performed under the action of the random rules (\ref{model1}).
The evolution of the system in this case is given by the operator $T$.
When $\lambda=0$, all the operations
become frustrated, there is no exchange of money between the agents and then
the market stays frozen in its original state. The operator representing this type
of dynamics is just the identity operator. Therefore, we can establish a {\it generalized
continuous economic model} whose evolution in the PDFs space is determined by the
operator $T_{\lambda}$, which depends on the parameter $\lambda$ as follows:

\begin{definition}
$T_{\lambda}y(x)=(1-\lambda)y(x)+\lambda Ty(x)$, with $\lambda\in[0,1]$.
\end{definition}

Observe that the parameter $(1-\lambda)$ can also be interpreted as
a kind of {\it saving propensity} of the agents, in such a way that for $\lambda=1$
they do not save anything and they game all their resources,
and for $\lambda=0$ they save the totality of their money and
then all the transactions are frustrated and the market stays in a frozen state.

We present some properties of the operator $T_{\lambda}$,
which shows a dynamical behavior essentially similar to the behavior of $T$.
Concretely, the exponential distribution is also
the asymptotic wealth distribution reached by the system governed by $T_{\lambda}$,
independently of the effectiveness $\lambda$ of the random market.

Let us observe that
$T_{\lambda}=I$ for $\lambda=0$ and $T_{\lambda}=T$ for $\lambda=1$, where $I$
is the identity operator.

\begin{proposition}
$T_{\lambda}$ conserves the norm, i.e. for each $y\in B$, we have $T_{\lambda}y\in B$.
\end{proposition}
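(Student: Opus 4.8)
The plan is to reduce everything to the linearity of the norm on $L_1^+[0,\infty)$ together with Theorem \ref{teorema-norma}. First I would check that $T_{\lambda}y$ still lands in the cone of nonnegative functions: since $\lambda\in[0,1]$, the scalars $1-\lambda$ and $\lambda$ are both nonnegative, $y\ge 0$ by hypothesis, and $Ty\ge 0$ because the integrand $y(u)y(v)/(u+v)$ appearing in Definition \ref{def-region} is nonnegative on the region $S(x)$. Hence $T_{\lambda}y=(1-\lambda)y+\lambda Ty\ge 0$ pointwise, so the only remaining thing to verify is that its norm equals $1$ (which in particular gives finiteness, hence membership in $L_1^+[0,\infty)$).

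Next I would compute the norm directly. Because $\vert\vert\cdot\vert\vert$ is just integration over $[0,\infty)$, it is additive and positively homogeneous on $L_1^+$, so
$$
\vert\vert T_{\lambda}y\vert\vert=\int_0^\infty\bigl[(1-\lambda)y(x)+\lambda Ty(x)\bigr]\,dx=(1-\lambda)\vert\vert y\vert\vert+\lambda\vert\vert Ty\vert\vert.
$$
Now I invoke Theorem \ref{teorema-norma}: $\vert\vert Ty\vert\vert=\vert\vert y\vert\vert^2$. Since $y\in B$ means $\vert\vert y\vert\vert=1$, this yields $\vert\vert Ty\vert\vert=1$, and therefore $\vert\vert T_{\lambda}y\vert\vert=(1-\lambda)\cdot 1+\lambda\cdot 1=1<\infty$. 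Thus $T_{\lambda}y\in L_1^+[0,\infty)$ with $\vert\vert T_{\lambda}y\vert\vert=1$, i.e. $T_{\lambda}y\in B$, as claimed.

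There is essentially no serious obstacle here; the only point requiring a word of care is the legitimacy of splitting the integral, which is harmless because both $(1-\lambda)y$ and $\lambda Ty$ are nonnegative and integrable (the latter by Theorem \ref{teorema-norma}), so no cancellation or convergence issues arise. The computation also makes transparent why the convex-combination structure of $T_{\lambda}$ is exactly what is needed: for a general $y\in L_1^+[0,\infty)$ the same steps give $\vert\vert T_{\lambda}y\vert\vert=(1-\lambda)\vert\vert y\vert\vert+\lambda\vert\vert y\vert\vert^2$, which collapses to $\vert\vert y\vert\vert$ precisely on the unit sphere $B$.
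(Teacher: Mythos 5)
Your proof is correct and is exactly the argument the paper intends (the paper states this proposition without printing a proof, deferring to the cited reference): linearity of the integral plus Theorem \ref{teorema-norma} gives $\vert\vert T_{\lambda}y\vert\vert=(1-\lambda)\vert\vert y\vert\vert+\lambda\vert\vert y\vert\vert^2=1$ on $B$, and nonnegativity is clear from the convex-combination form. Your closing remark that the identity collapses to $\vert\vert y\vert\vert$ only on the unit sphere is a nice observation consistent with Theorem \ref{teorema-norma} itself.
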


\begin{proposition}
$T_{\lambda}$ conserves the average value of $y\in B$, i.e. $<x>_y=<x>_{T_{\lambda}y}$,
where $<x>_y$ represents the mean value expressed in Definition \ref{def-mean1}.
\end{proposition}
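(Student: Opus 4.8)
The plan is to reduce the statement for $T_\lambda$ to the corresponding property already established for $T$, using the linearity of the decomposition $T_\lambda y = (1-\lambda)y + \lambda Ty$. First I would recall from Definition~\ref{def-mean1} that the mean richness is the linear functional $y \mapsto \langle x\rangle_y = \int_0^\infty x\,y(x)\,dx$, so that for any two distributions $y_1,y_2 \in L_1^+[0,\infty)$ and any scalars $a,b\ge 0$ we have $\langle x\rangle_{a y_1 + b y_2} = a\langle x\rangle_{y_1} + b\langle x\rangle_{y_2}$, provided the integrals converge. Applying this with $y_1 = y$, $y_2 = Ty$, $a = 1-\lambda$, $b = \lambda$ gives immediately
$$
\langle x\rangle_{T_\lambda y} = (1-\lambda)\langle x\rangle_y + \lambda\langle x\rangle_{Ty}.
$$

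Next I would invoke the mean-conservation theorem for $T$ stated earlier in the excerpt, namely $\langle x\rangle_{Ty} = \langle x\rangle_y$ for every $y \in B$. Substituting this into the displayed identity yields
$$
\langle x\rangle_{T_\lambda y} = (1-\lambda)\langle x\rangle_y + \lambda\langle x\rangle_y = \langle x\rangle_y,
$$
which is exactly the claim. The argument is short precisely because all the analytic work — showing that $Ty$ is again a well-defined element of $B$ with finite first moment equal to that of $y$ — has already been done; here one only needs the affine combination to pass through the linear moment functional.

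The only point requiring a little care, and the closest thing to an obstacle, is the integrability bookkeeping: the identity $\langle x\rangle_{a y_1+b y_2}=a\langle x\rangle_{y_1}+b\langle x\rangle_{y_2}$ is valid as stated when $\langle x\rangle_{y}<\infty$, and one should note that in the degenerate case $\langle x\rangle_y = +\infty$ the conclusion still holds in the sense that both sides are $+\infty$, since all coefficients are nonnegative and $\langle x\rangle_{Ty}=\langle x\rangle_y$ by the cited theorem. A second, even more trivial, remark is that the previous proposition guarantees $T_\lambda y \in B$, so $\langle x\rangle_{T_\lambda y}$ is a meaningful quantity in the first place. With these observations the proof is complete for all $\lambda \in [0,1]$; the endpoint cases $\lambda=0$ (where $T_\lambda=I$) and $\lambda=1$ (where $T_\lambda=T$) are of course consistent with the general computation.
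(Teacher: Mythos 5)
Your proof is correct and is essentially the argument the paper intends: the proposition is stated without an explicit proof (the result is imported from the cited reference), and the natural justification is exactly yours, namely linearity of the moment functional applied to the convex combination $T_{\lambda}y=(1-\lambda)y+\lambda Ty$ together with the previously stated conservation $<x>_{Ty}=<x>_y$. Your extra remarks on integrability and the endpoint cases $\lambda=0,1$ are sound but not needed beyond what the paper assumes.
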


\begin{theorem}
For any $\lambda\in (0,1)$, the operators $T$ and $T_{\lambda}$ have the same fixed points.
\label{theor-same}
\end{theorem}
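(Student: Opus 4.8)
The plan is to show the two inclusions of fixed-point sets separately, exploiting the affine structure of $T_\lambda$. The easy direction is that every fixed point of $T$ is a fixed point of $T_\lambda$: if $Ty=y$ then $T_\lambda y = (1-\lambda)y + \lambda Ty = (1-\lambda)y + \lambda y = y$, and this holds for every $\lambda$, in particular for $\lambda\in(0,1)$. No hypothesis beyond the definition of $T_\lambda$ is needed here.

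For the reverse inclusion, suppose $y\in L_1^+[0,\infty)$ satisfies $T_\lambda y = y$ for some fixed $\lambda\in(0,1)$. Writing out the definition gives $(1-\lambda)y(x) + \lambda Ty(x) = y(x)$, hence $\lambda Ty(x) = \lambda y(x)$ for all $x\ge 0$. Since $\lambda\neq 0$ we may divide by $\lambda$ and conclude $Ty(x)=y(x)$, i.e. $y$ is a fixed point of $T$. Thus the two operators have exactly the same fixed points in $L_1^+[0,\infty)$; combining this with Theorem \ref{teorema-unicidad}, these are precisely $y=0$ and the exponential family $y_\alpha(x)=\alpha e^{-\alpha x}$, $\alpha>0$.

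I do not anticipate a genuine obstacle here: the argument is a one-line algebraic manipulation that works pointwise in $x$, and the only place the hypothesis $\lambda\in(0,1)$ enters is to guarantee $\lambda\neq 0$ (so that the cancellation is legitimate) — note that for $\lambda=0$ the claim fails, since $T_0=I$ has every element of $L_1^+[0,\infty)$ as a fixed point. If one wanted to be careful about the function space, one could remark that both $y$ and $Ty$ lie in $L_1^+[0,\infty)$ (the latter by Theorem \ref{teorema-norma}), so the identity $Ty=y$ is an identity of bona fide elements of the space rather than merely a formal one. The only mild subtlety worth a sentence is that the equality of the two functions is as elements of $L_1^+$, i.e. up to the usual identifications, but since the manipulation is genuinely pointwise this causes no trouble.
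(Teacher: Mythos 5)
Your argument is correct: the fixed-point equation $T_\lambda y = y$ is equivalent to $\lambda(Ty - y) = 0$, and $\lambda \neq 0$ gives the nontrivial inclusion, while the other inclusion is immediate from the affine definition of $T_\lambda$. The paper states this theorem without printing a proof (it is cited from an earlier reference), but your one-line cancellation is the standard and essentially unique argument, and your remarks about the failure at $\lambda=0$ and about $Ty$ lying in $L_1^+[0,\infty)$ are apt.
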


\begin{corollary}
The function $y(x)=0$ and the family of exponential distributions
$y_{\delta}(x)=\delta e^{-\delta x}$, $\delta>0$, are the only fixed points
of $T_{\lambda}$ in $L_1^+[0,\infty)$, with $\lambda\in (0,1]$.
\end{corollary}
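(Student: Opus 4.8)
The plan is to obtain this corollary directly from Theorem~\ref{theor-same} and Theorem~\ref{teorema-unicidad}, handling the open range $\lambda\in(0,1)$ and the endpoint $\lambda=1$ separately, since Theorem~\ref{theor-same} is only stated for $\lambda\in(0,1)$.

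First I would record the elementary observation underlying Theorem~\ref{theor-same}: for $\lambda\in(0,1)$, from the definition $T_\lambda y=(1-\lambda)y+\lambda Ty$ the fixed-point equation $T_\lambda y=y$ is equivalent to $\lambda(Ty-y)=0$, and since $\lambda\neq 0$ this holds if and only if $Ty=y$. Thus the set of fixed points of $T_\lambda$ in $L_1^+[0,\infty)$ coincides with the set of fixed points of $T$ for every $\lambda\in(0,1)$. This step is essentially just a citation of Theorem~\ref{theor-same}, but it is worth noting how short the argument is.

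Next I would dispose of the endpoint: by definition $T_\lambda=T$ when $\lambda=1$, so $T_1$ trivially has exactly the fixed points of $T$. Combining the two cases, for all $\lambda\in(0,1]$ the fixed points of $T_\lambda$ in $L_1^+[0,\infty)$ are precisely the fixed points of $T$. Finally, I would invoke Theorem~\ref{teorema-unicidad}, which says these are exactly $y=0$ and the exponentials $y_\delta(x)=\delta e^{-\delta x}$ with $\delta>0$; substituting this characterization yields the statement.

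I do not anticipate any genuine obstacle: the corollary is a repackaging of two already-established results. The only points that deserve a word of care are that Theorem~\ref{theor-same} excludes $\lambda=1$, so that value must be covered by the separate (and immediate) remark $T_1=T$, and that one should confirm each $y_\delta$ actually belongs to $L_1^+[0,\infty)$, which is clear since $y_\delta>0$ and $\vert\vert y_\delta\vert\vert=1$.
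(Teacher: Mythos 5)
Your proposal is correct and follows exactly the route the paper intends: the corollary is an immediate combination of Theorem~\ref{theor-same} (fixed points of $T_{\lambda}$ coincide with those of $T$ for $\lambda\in(0,1)$, which as you note reduces to $\lambda(Ty-y)=0$) with Theorem~\ref{teorema-unicidad}, plus the trivial observation that $T_1=T$ covers the endpoint. Your explicit handling of $\lambda=1$ is a small but welcome point of care that the paper leaves implicit.
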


\begin{theorem}
Suppose that for a given $\lambda\in (0,1)$ we have 
$\lim_{n\to\infty}\vert\vert T_{\lambda}^ny(x)-\mu(x)\vert\vert=0$, with $\mu(x)$
a continuous function, then $\mu(x)$ should be the fixed point of the operator 
$T_{\lambda}$ for the initial condition $y(x)\in B$. In other words,
$\mu(x)=\delta e^{-\delta x}$ with $\delta={1 \over <x>_y}$.
\end{theorem}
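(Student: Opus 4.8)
The plan is to exploit the two conservation laws together with the continuity assumption to pin down $\mu$. First I would note that each of the propositions above gives $\|T_\lambda^n y\| = 1$ and $\langle x\rangle_{T_\lambda^n y} = \langle x\rangle_y$ for every $n$. If $\|T_\lambda^n y - \mu\| \to 0$, then passing to the limit under the integral (the norm convergence is precisely $L^1$-convergence, so $\int T_\lambda^n y \to \int \mu$) yields $\|\mu\| = 1$, so $\mu \in B$. One must be slightly careful about the first moment, since $x$ is unbounded and $L^1$-convergence does not automatically transfer $\langle x\rangle$; I would either invoke the monotone/dominated structure available here or simply add (as the statement implicitly allows) that $\mu$ inherits the mean value $\langle x\rangle_y =: x_0$, so $\mu \in B_{x_0}$.

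Next I would show $\mu$ is a fixed point of $T_\lambda$. The key step is continuity of $T_\lambda$ with respect to the $L^1$-norm on $B$: from $T_\lambda = (1-\lambda)I + \lambda T$ it suffices to show $T$ is $L^1$-continuous on $B$, and in fact the bound $\|Ty - Tw\| \le C\|y-w\|$ on $B$ (for instance via the bilinear structure of $T$ and Theorem \ref{teorema-norma}, writing $Ty - Tw$ in terms of $y-w$ and using $\|y\|=\|w\|=1$) does the job. Granting this, $T_\lambda^{n+1} y = T_\lambda(T_\lambda^n y) \to T_\lambda \mu$ in norm, while also $T_\lambda^{n+1} y \to \mu$; uniqueness of the $L^1$-limit forces $T_\lambda \mu = \mu$.

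Finally I would apply Theorem \ref{theor-same}, which says $T_\lambda$ and $T$ have the same fixed points for $\lambda \in (0,1)$, together with Theorem \ref{teorema-unicidad}, identifying the nonzero fixed points of $T$ as exactly $y_\alpha(x) = \alpha e^{-\alpha x}$. Since $\mu \in B$ with $\|\mu\|=1 \neq 0$, $\mu$ is one of these exponentials; and since $\mu$ has mean $x_0 = \langle x\rangle_y$, matching $\int_0^\infty x\,\alpha e^{-\alpha x}\,dx = 1/\alpha = x_0$ gives $\alpha = 1/\langle x\rangle_y =: \delta$, i.e. $\mu(x) = \delta e^{-\delta x}$.

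The main obstacle I anticipate is the $L^1$-continuity (equivalently, Lipschitz-type) estimate for $T$ on the unit sphere $B$: one needs a genuine quantitative bound, and the integral kernel $1/(u+v)$ over the unbounded region $S(x)$ requires care near $u+v \to 0$ and at infinity. The bilinear identity $Ty - Tw = \tfrac12\big(T_{\mathrm{bil}}(y-w,\,y+w)\big)$ (polarizing the quadratic operator) reduces it to controlling a bilinear analogue of $T$ in terms of $\|y-w\|$ and $\|y+w\| \le 2$, which should close the argument, but verifying the bilinear operator maps $L^1 \times L^1 \to L^1$ boundedly is the technical heart. A secondary subtlety, already flagged above, is justifying that the first moment passes to the limit; if it does not come for free one simply carries $\mu \in B_{x_0}$ as a hypothesis consistent with the theorem's phrasing.
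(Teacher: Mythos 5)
The paper states this theorem without any proof (it defers to the cited references), so there is no in-text argument to compare against; judged on its own merits, your proposal follows the natural route and its core is sound. Moreover, the step you single out as the ``technical heart'' is actually the easy part: writing $Ty-Tw=T_{\mathrm{bil}}(y-w,\,y)+T_{\mathrm{bil}}(w,\,y-w)$ with $T_{\mathrm{bil}}(f,g)(x)=\int\!\!\int_{u+v>x}f(u)g(v)(u+v)^{-1}\,du\,dv$, the same Fubini computation that proves Theorem \ref{teorema-norma} gives $\vert\vert T_{\mathrm{bil}}(f,g)\vert\vert\le\vert\vert f\vert\vert\,\vert\vert g\vert\vert$ (bound the integrand by absolute values and integrate in $x$ first, over $(0,u+v)$), hence $\vert\vert Ty-Tw\vert\vert\le 2\vert\vert y-w\vert\vert$ on $B$ and $T_\lambda$ is Lipschitz with constant $1+\lambda$. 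From there $T_\lambda\mu=\mu$ follows by uniqueness of $L^1$-limits, and Theorem \ref{theor-same}, Theorem \ref{teorema-unicidad} and $\vert\vert\mu\vert\vert=1$ force $\mu=\delta e^{-\delta x}$ for some $\delta>0$, exactly as you argue.

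The one genuine gap is the identification $\delta=1/\!<x>_y$, which you flag but do not close. $L^1$-convergence preserves the norm but not the first moment: passing to an a.e.\ convergent subsequence, Fatou's lemma only yields $1/\delta=\,<x>_\mu\,\le\liminf_n<x>_{T_\lambda^ny}\,=\,<x>_y$, and the reverse inequality requires uniform integrability of $x\,T_\lambda^ny(x)$, which does not follow from the stated hypotheses (a sequence of PDFs with common mean $x_0$ can converge in $L^1$ to a limit of strictly smaller mean by sending a vanishing amount of mass far out; Example \ref{example-pareto} shows the dynamics can indeed leak first moment to infinity). To close this you must either extract such uniform integrability from the structure of $T_\lambda$, or invoke the contraction result of \cite{gutriel2013} on the set of PDFs with fixed mean, which identifies the limit with the exponential of the correct rate directly. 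Your fallback of ``carrying $\mu$ with mean $x_0$ as a hypothesis'' is honest but amounts to assuming the part of the conclusion that actually needs proof.
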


\begin{example}
Take the Gamma distribution $y(x)=xe^{-x}$, so that $y\in
B$ and $\delta={1\over 2}$, then in this case $\mu(x)={1\over
2}e^{-{1\over 2}x}$. For $\lambda= 0.5$, we find numerically that
$\vert\vert y-\mu\vert\vert=0.368226$, $\vert\vert
T_{\lambda}y-\mu\vert\vert=0.273011$, $\vert\vert
T_{\lambda}^2y-\mu\vert\vert=0.206554$, $\vert\vert
T_{\lambda}^3y-\mu\vert\vert=0.158701$, and so on. It is shown in
Fig. \ref{fig-y-gamma1}. Then we can guess that
$\lim_{n\rightarrow\infty}\vert\vert
T_{\lambda}^ny-\mu\vert\vert=0$.
\end{example}

\begin{figure}[h]
\begin{center}
\psfrag{B}{} \psfrag{A}{\large\scriptsize (a)}
\includegraphics[width=1.5in, height=1.3in]{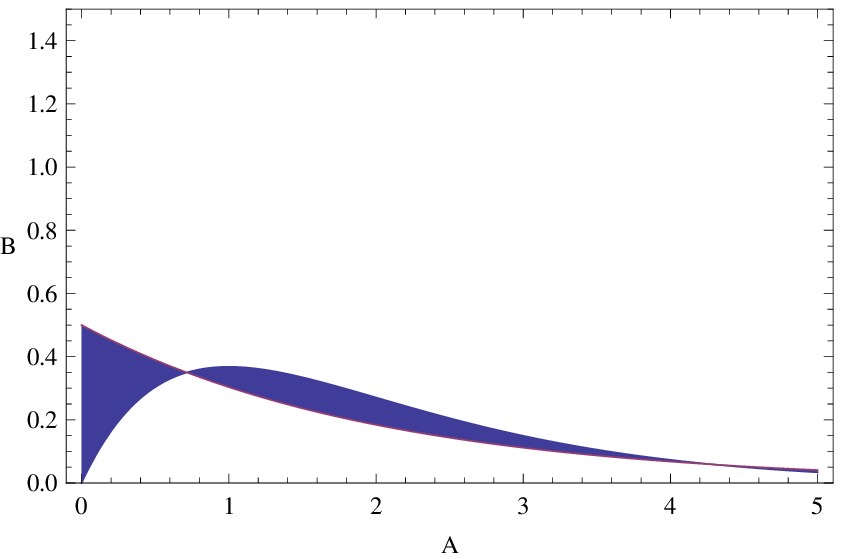} \hskip 2 mm
\psfrag{B}{} \psfrag{A}{\large\scriptsize (b)}
\includegraphics[width=1.5in, height=1.3in]{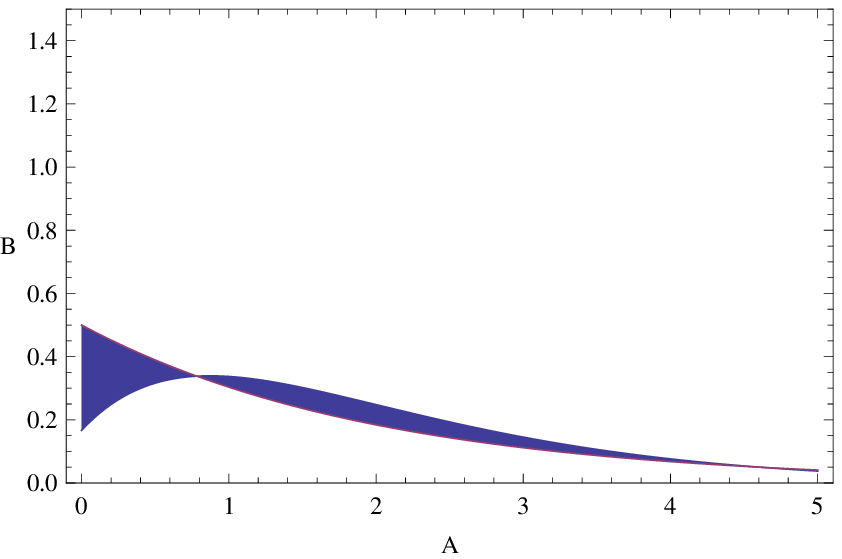} \hskip 2 mm
\psfrag{B}{} \psfrag{A}{\large\scriptsize (c)}
\includegraphics[width=1.5in, height=1.3in]{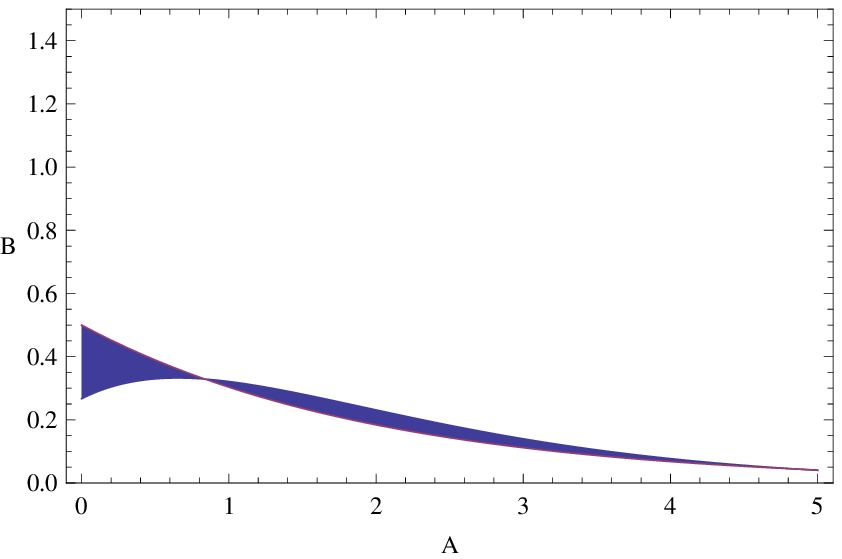}
\caption{Plot of $y(x)=xe^{-x}$, $T_{\lambda}$-iterates of $y$
for $\lambda=0.5$ and $\mu(x)={1\over 2}e^{-{1\over 2}x}$. (a)
$\vert\vert y-\mu\vert\vert$, (b) $\vert\vert
T_{\lambda}y-\mu\vert\vert$, (c) $\vert\vert
T_{\lambda}^2y-\mu\vert\vert$.} \label{fig-y-gamma1}
\end{center}
\end{figure}

Also, the increasing of entropy in the system evolution can be checked.
Then, similarly to the first model, this model has a fixed point, $\delta e^{-\delta x}$, 
which is asymptotically reached depending on the initial average value $<x>_{y_0}$ and
following a trajectory of increasing entropy. The difference with the first model remains 
in the transient towards equilibrium, that evidently is a longer time for a lower effectiveness 
$\lambda$ of the random market.


\section{The continuous gas-like model with limitation of the richness}

Here, we study the effect of a limitation in the maximum richness that an agent can have.
We establish this upper limit to be $\Lambda$ for $x$: $x\in[0,\Lambda]$.
Now, the mean wealth of the system is:
$$
<x>_y=\int_0^\Lambda xy(x)dx.
$$
Evidently, $<x>_y<\Lambda$. 

The existence of the cut-off $\Lambda$ in the economic system not only means
that agents can not have more money than $\Lambda$ if not the creation of some kind of control 
on the system that do not let the agents to perform trades that surpass the limit $\Lambda$.
In the case of interaction by pairs, it implies that only the pairs verifying $u+v<\Lambda$ are allowed to trade
and then they can exchange their money according to rules (\ref{model1}). In the rest of interactions surpassing 
the upper limit, that is $u+v>\Lambda$, the agents are not allowed to trade and then they conserve their original money.  
Hence, the generalization of the operator $T$ for this system is the following:
$$
[T_\Lambda y](x)=\int \int_{x\le u+v\le\Lambda}{y(u)y(v)\over u+v}dudv+y(x)\int_{\Lambda-x}^\Lambda y(v)dv,
$$
where the first term integrates the allowed trades according to rules (\ref{model1}) and the second term gives 
account of the total probability of encounters with forbidden trades that an agent of richness $x$ can have with other 
agents of the ensemble. 

Observe that $\lim_{\Lambda\to\infty}T_\Lambda=T$. Also:
$$
[T_\Lambda y](x)=\int_0^x du \int_{x-u}^{\Lambda-u}dv{y(u)y(v)\over u+v}+ 
$$
$$
+\int_x^\Lambda du \int_0^{\Lambda-u}dv{y(u)y(v)\over u+v}+y(x)\int_{\Lambda-x}^\Lambda y(v)dv.
$$

\begin{theorem} 
For any $y\in L_1^+[0,\infty)$ and $\Lambda>0$ we have that $\vert\vert T_\Lambda y\vert\vert=\vert\vert y\vert\vert^2$. 
In particular, if $\vert\vert y\vert\vert=1$ then $\vert\vert T_\Lambda y\vert\vert=1$.
\end{theorem}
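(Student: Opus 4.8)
The plan is to reduce the identity to an elementary interchange of the order of integration, exactly as in the proof of Theorem~\ref{teorema-norma}. Since $T_\Lambda y$ is a distribution of richness supported on $[0,\Lambda]$, I would compute $\vert\vert T_\Lambda y\vert\vert=\int_0^\Lambda [T_\Lambda y](x)\,dx$ by splitting it into the two terms appearing in the definition of $T_\Lambda$: the ``allowed trades'' double integral and the ``frozen'' term $y(x)\int_{\Lambda-x}^\Lambda y(v)\,dv$. All integrands are non-negative, so Tonelli's theorem justifies every exchange of integration order used below.

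For the first term, I would regard it as a triple integral over $\{(x,u,v):0\le x,\ u,v>0,\ x\le u+v\le\Lambda\}$ and integrate out $x$ first: for fixed $(u,v)$ with $u+v\le\Lambda$ the variable $x$ runs over $[0,u+v]$, so $\int_0^{u+v}dx=u+v$ cancels the denominator and
$$
\int_0^\Lambda dx\!\!\int\!\!\int_{x\le u+v\le\Lambda}\frac{y(u)y(v)}{u+v}\,du\,dv=\int\!\!\int_{\Delta_1}y(u)y(v)\,du\,dv,
$$
where $\Delta_1=\{u,v>0,\ u+v\le\Lambda\}$, a set that already forces $u,v\le\Lambda$. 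For the second term, Tonelli gives
$$
\int_0^\Lambda y(x)\Big(\int_{\Lambda-x}^\Lambda y(v)\,dv\Big)dx=\int\!\!\int_{\Delta_2}y(x)y(v)\,dx\,dv,
$$
with $\Delta_2=\{0\le x\le\Lambda,\ 0\le v\le\Lambda,\ x+v\ge\Lambda\}$.

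I would then note that $\Delta_1$ and $\Delta_2$ together cover the square $[0,\Lambda]^2$ and overlap only along the segment $u+v=\Lambda$, which has measure zero. Adding the two contributions therefore yields $\int_0^\Lambda\!\!\int_0^\Lambda y(u)y(v)\,du\,dv=\big(\int_0^\Lambda y\big)^2=\vert\vert y\vert\vert^2$, and in particular $\vert\vert T_\Lambda y\vert\vert=1$ whenever $\vert\vert y\vert\vert=1$.

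There is no serious obstacle here; the only points needing care are (i) matching the $(u,v)$-domains, i.e.\ checking that the constraint $x\le u+v\le\Lambda$ in the first term already confines $(u,v)$ to $[0,\Lambda]^2$, so that the union of $\Delta_1$ with $\Delta_2$ is exactly the full square and nothing larger; and (ii) being explicit that only the restriction of $y$ to $[0,\Lambda]$ enters $T_\Lambda$, so that $\vert\vert y\vert\vert$ is to be read as $\int_0^\Lambda y(x)\,dx$ (equivalently, one works in $L_1^+[0,\Lambda]$, consistently with the mean-wealth definition used in this section).
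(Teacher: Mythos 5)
Your proof is correct and follows essentially the same route as the paper's: interchange the order of integration (Tonelli), integrate out $x$ over $[0,u+v]$ to cancel the denominator, and observe that the resulting region together with the region of the ``frozen'' term tiles the square $[0,\Lambda]^2$, so the sum is $\bigl(\int_0^\Lambda y\bigr)^2$. Your closing remark that $\vert\vert y\vert\vert$ must be read as $\int_0^\Lambda y(x)\,dx$ (i.e.\ one works in $L_1^+[0,\Lambda]$) is a point the paper leaves implicit and is worth making explicit.
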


\begin{proof}
\smartqed
Take $y\in L_1^+[0,\infty)$. Then 
\begin{eqnarray*}
& \vert\vert T_\Lambda y\vert\vert= \int_0^\Lambda[T_\Lambda y](x)dx = \\
& \int_0^\Lambda du\int_0^{\Lambda-u}dv\int_0^{u+v}dx{y(u)y(v)\over u+v}
+\int_0^\Lambda y(x)dx\int_{\Lambda-x}^\Lambda y(v)dv= \\
& \int_0^\Lambda y(u)du\int_0^{\Lambda-u}y(v)dv+\int_0^\Lambda y(u)du\int_{\Lambda-u}^\Lambda y(v)dv= \vert\vert y\vert\vert^2. 
\end{eqnarray*}
\qed
\end{proof}

\begin{theorem}
The mean richness is conserved by $T_\Lambda$, that is $<x>_{T_\Lambda y}=<x>_y$ for any $y\in B$.
\end{theorem}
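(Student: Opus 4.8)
The plan is to compute $<x>_{T_\Lambda y}=\int_0^\Lambda x\,[T_\Lambda y](x)\,dx$ directly from the explicit form of $T_\Lambda$, splitting it into the "allowed trades" double integral and the "frozen" term, then interchanging the order of integration in the first piece by Fubini (legitimate since the integrand is nonnegative).

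First I would write
$$
<x>_{T_\Lambda y}=\int_0^\Lambda x\!\iint_{x\le u+v\le\Lambda}\!\frac{y(u)y(v)}{u+v}\,du\,dv\,dx+\int_0^\Lambda x\,y(x)\!\int_{\Lambda-x}^\Lambda\! y(v)\,dv\,dx=:I_1+I_2 .
$$
For $I_1$, for a fixed pair $(u,v)$ with $u,v>0$ and $u+v\le\Lambda$ the variable $x$ runs over $[0,u+v]$, and $\int_0^{u+v}x\,dx=(u+v)^2/2$, so the swap gives $I_1=\tfrac12\iint_{u+v\le\Lambda}(u+v)\,y(u)y(v)\,du\,dv$. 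Using the symmetry of the domain $\{u+v\le\Lambda\}$ and of $y(u)y(v)$ under $u\leftrightarrow v$, I would replace $\tfrac12(u+v)$ by $u$, obtaining $I_1=\int_0^\Lambda u\,y(u)\bigl(\int_0^{\Lambda-u}y(v)\,dv\bigr)du$.

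Then, relabelling $u$ as $x$ in $I_1$ and adding $I_2$,
$$
<x>_{T_\Lambda y}=\int_0^\Lambda x\,y(x)\Bigl(\int_0^{\Lambda-x}y(v)\,dv+\int_{\Lambda-x}^\Lambda y(v)\,dv\Bigr)dx=\int_0^\Lambda x\,y(x)\Bigl(\int_0^\Lambda y(v)\,dv\Bigr)dx ,
$$
and since $y\in B$ means $\vert\vert y\vert\vert=\int_0^\Lambda y(v)\,dv=1$, the inner bracket is $1$ and we conclude $<x>_{T_\Lambda y}=\int_0^\Lambda x\,y(x)\,dx=<x>_y$.

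The computation is essentially routine once the bookkeeping of integration regions is in place; the only point needing care is checking that the region $\{0\le v\le\Lambda-u\}$ left over from the traded term (after the Fubini swap) glues exactly with the region $\{\Lambda-x\le v\le\Lambda\}$ of the frozen term to recover the full $v$-interval $[0,\Lambda]$, and that the normalization $\vert\vert y\vert\vert=1$ is precisely what closes the identity — indeed the same argument for a general $y\in L_1^+[0,\infty)$ yields $<x>_{T_\Lambda y}=\vert\vert y\vert\vert\,<x>_y$, in parallel with the previous theorem $\vert\vert T_\Lambda y\vert\vert=\vert\vert y\vert\vert^2$.
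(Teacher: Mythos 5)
Your proof is correct and follows essentially the same route as the paper: Fubini on the traded term (yielding the factor $(u+v)/2$), then gluing the leftover $v$-region with the frozen term to recover $\vert\vert y\vert\vert=1$. The only difference is cosmetic — you use the $u\leftrightarrow v$ symmetry to replace $\tfrac12(u+v)$ by $u$ at once, whereas the paper splits it into $u/2+v/2$ and recombines the resulting three pieces; your version is slightly tidier but the computation is the same.
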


\begin{proof}
\smartqed
\begin{eqnarray*}
& <x>_{T_\Lambda y}= \int_0^\Lambda xy(x)dx= \\
& \int_0^\Lambda du\int_0^{\Lambda-u}dv\int_0^{u+v}xdx{y(u)y(v)\over u+v}
+\int_0^\Lambda xy(x)dx\int_{\Lambda-x}^\Lambda y(v)dv= \\
& {1\over 2}\int_0^\Lambda du\int_0^{\Lambda-u}dv(u+v)y(u)y(v)+\int_0^\Lambda xy(x)dx\int_{\Lambda-x}^\Lambda y(v)dv= \\
& {1\over 2}\int_0^\Lambda uy(u)du\int_0^{\Lambda-u}y(v)dv+{2\over 2}\int_0^\Lambda uy(u)du\int_{\Lambda-u}^\Lambda y(v)dv
+{1\over 2}\int_0^\Lambda y(u)du\int_0^{\Lambda-u}vy(v)dv= \\ 
& {1\over 2}\int_0^\Lambda uy(u)du\int_0^\Lambda y(v)dv+{1\over 2}\int_0^\Lambda y(v)dv\int_{\Lambda-v}^\Lambda uy(u)du
+{1\over 2}\int_0^\Lambda y(u)du\int_0^{\Lambda-u}vy(v)dv = \\ 
& {1\over 2}\int_0^\Lambda uy(u)du\int_0^\Lambda y(v)dv+{1\over 2}\int_0^\Lambda y(v)dv\int_0^\Lambda uy(u)du=
{1\over 2}<x>_y+{1\over 2}<x>_y=<x>_y.
\end{eqnarray*}
\qed
\end{proof}

\begin{theorem} 
The function
$$
y_{a,\Lambda}(x)={ae^{-ax}\over 1-e^{-a\Lambda}}
$$
has $\vert\vert y_{a,\Lambda}\vert\vert=1$ and is a fixed point of the operator $T_\Lambda$ for any $a>0$. 
The mean richness for this function is
$$
<x>_{y_{a,\Lambda}}={1\over a}+{\Lambda\over 1-e^{a\Lambda}}.
$$
\end{theorem}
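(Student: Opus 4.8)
The plan is to verify the three assertions in turn. Throughout, write $y:=y_{a,\Lambda}$ and set $C:=a/(1-e^{-a\Lambda})$, so that $y(x)=Ce^{-ax}$ on $[0,\Lambda]$. First, the normalization is immediate, since $\int_0^\Lambda Ce^{-ax}\,dx=C(1-e^{-a\Lambda})/a=1$ by the choice of $C$. Second, the mean richness follows from one integration by parts: $\int_0^\Lambda xe^{-ax}\,dx=\frac{1}{a^2}(1-e^{-a\Lambda})-\frac{\Lambda}{a}e^{-a\Lambda}$, so $<x>_y=C$ times this equals $\frac{1}{a}-\frac{\Lambda e^{-a\Lambda}}{1-e^{-a\Lambda}}$, and multiplying the last fraction above and below by $e^{a\Lambda}$ gives $\frac{1}{a}+\frac{\Lambda}{1-e^{a\Lambda}}$. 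Both of these steps are routine.

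The core of the theorem is the identity $T_\Lambda y=y$, and here I would use the compact form
$$[T_\Lambda y](x)=\int\!\!\int_{x\le u+v\le\Lambda}\frac{y(u)y(v)}{u+v}\,du\,dv+y(x)\int_{\Lambda-x}^\Lambda y(v)\,dv.$$
The second ("forbidden trades") term is elementary: it equals $Ce^{-ax}\cdot\frac{C}{a}\bigl(e^{-a(\Lambda-x)}-e^{-a\Lambda}\bigr)=\frac{C^2}{a}e^{-a\Lambda}\bigl(1-e^{-ax}\bigr)$. For the double integral I would change variables to $s=u+v$. The key observation --- and the step I expect to be the crux --- is that the exponential ansatz makes the integrand $y(u)y(v)/(u+v)=C^2e^{-a(u+v)}/(u+v)=C^2e^{-as}/s$ a function of $s$ alone; since for each fixed $s$ the segment $\{u+v=s,\ u,v>0\}$ has length $s$ in the $u$-variable, the inner integration cancels the factor $1/s$, and the double integral collapses to $C^2\int_x^\Lambda e^{-as}\,ds=\frac{C^2}{a}\bigl(e^{-ax}-e^{-a\Lambda}\bigr)$.

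Adding the two contributions, the $e^{-a\Lambda}$ terms telescope, leaving $[T_\Lambda y](x)=\frac{C^2}{a}(1-e^{-a\Lambda})e^{-ax}$, which is exactly $Ce^{-ax}=y(x)$ because $C=a/(1-e^{-a\Lambda})$; hence $y$ is a fixed point. Once the $s=u+v$ substitution is recognized the whole argument reduces to one-variable exponential integrals, so there is no real analytic difficulty beyond that reduction. As a consistency check, letting $\Lambda\to\infty$ sends $y_{a,\Lambda}$ to $ae^{-ax}$ and the identity above degenerates to $Ty_a=y_a$, recovering Theorem \ref{teorema-unicidad}.
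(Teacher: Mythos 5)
Your proof is correct and supplies exactly the computation the paper omits (its proof is literally ``It is just a straightforward computation''): the normalization and mean are routine one-variable integrals, and the fixed-point identity follows from the $s=u+v$ reduction, whose cross-sectional length $s$ cancels the $1/(u+v)$ factor so that the allowed-trades term gives $\frac{C^2}{a}(e^{-ax}-e^{-a\Lambda})$ and telescopes against the forbidden-trades term. No gaps; this is the intended argument.
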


\begin{proof}
\smartqed
It is just a straightforward computation.
\qed
\end{proof}

\vskip 0.2cm
\begin{proposition}
For the fixed point $y_{a,\Lambda}(x)$, we have $2<x>_{y_{a,\Lambda}}<\Lambda$.
\end{proposition}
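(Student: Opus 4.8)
The plan is to reduce the statement to a single scalar inequality and then settle it with a short monotonicity argument, using only the formula for the mean richness established in the preceding theorem.

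First I would substitute that formula, $<x>_{y_{a,\Lambda}} = \frac{1}{a} + \frac{\Lambda}{1-e^{a\Lambda}}$. Since $a>0$, multiplying the target inequality $2<x>_{y_{a,\Lambda}}<\Lambda$ by $a$ and abbreviating $t:=a\Lambda>0$ turns it into $2 + \frac{2t}{1-e^{t}} < t$, that is,
$$
t - 2 + \frac{2t}{e^{t}-1} > 0 .
$$
Because $e^{t}-1>0$ for $t>0$, clearing this denominator (which preserves the inequality) shows the claim is equivalent to
$$
g(t) := (t+2) - (2-t)\,e^{t} > 0 \qquad \hbox{for every } t>0 .
$$

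Then I would verify $g>0$ on $(0,\infty)$ by differentiation. A direct computation gives $g(0)=0$, $g'(t) = 1 + (t-1)e^{t}$ with $g'(0)=0$, and $g''(t) = t\,e^{t}$, which is strictly positive for $t>0$. Hence $g'$ is strictly increasing on $(0,\infty)$, so $g'(t)>g'(0)=0$ there; consequently $g$ is strictly increasing on $(0,\infty)$, so $g(t)>g(0)=0$. Reading this back through $t=a\Lambda$ gives $2<x>_{y_{a,\Lambda}}<\Lambda$ for all $a>0$ and $\Lambda>0$.

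I do not anticipate a genuine obstacle. The only point needing a little care is the sign of $2-t$ when clearing denominators — but writing the reduced claim in the form $g(t)>0$ sidesteps a case split between $t<2$ and $t\ge 2$, and the positivity of $g$ is then immediate from the manifestly positive second derivative $g''(t)=t\,e^{t}$ together with the vanishing of $g$ and $g'$ at the origin.
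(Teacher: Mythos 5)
Your proof is correct. The paper states this proposition without any proof at all, so there is nothing to compare against; your reduction to the scalar inequality $g(t)=(t+2)-(2-t)e^{t}>0$ for $t=a\Lambda>0$, settled via $g(0)=g'(0)=0$ and $g''(t)=t\,e^{t}>0$, is a complete and valid verification (and the vanishing of $g$ at $t=0$ correctly reflects that the bound becomes tight in the limit $a\Lambda\to 0^{+}$, where $y_{a,\Lambda}$ tends to the uniform distribution with mean $\Lambda/2$).
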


Hence, if we define $m=<x>_{y_{a,\Lambda}}$, we can consider {\it the middle class}, $CM$, 
as all those agents having richness between $m/2$ and $2m$, that is,
$$
CM(a,\Lambda)=\int_{m/2}^{2m}y_{a,\Lambda}(x)dx={e^{-am/2}-e^{-2am}\over 1-e^{-a\Lambda}}.
$$
The richness accumulated by the middle class is:
$$
<xCM>(a,\Lambda)=\int_{m/2}^{2m}xy_{a,\Lambda}(x)dx=
$$
$$
=m{(2+am)e^{-am/2}-2(1+2am)e^{-2am}\over 2am[1-e^{-a\Lambda}]},
$$
where
$$
am=1+{a\Lambda\over 1-e^{a\Lambda}}, \hskip 2cm x=a\Lambda.
$$
When we plot $CM(x)$ or $<xCM>(x)$ for fixed $m$, we see that it is always a decreasing function of $x=a\Lambda$. 
Therefore, the smaller the richness limit $\Lambda$ is, the larger the middle class is.

The same tendency can be observed if we calculate the mean wealth per individual of the middle class:
$$
{<xCM>(a,\Lambda)\over <CM>(a,\Lambda)}={m\over am}+{m[e^{-am/2}-4e^{-2am}]\over 2[e^{-am/2}-e^{-2am}]},
$$
which is also a decreasing function of $x=a\Lambda$, for fixed $m$.

The proportion of the total richness accumulated by the middle class is:
$$
{<xCM>(a,\Lambda)\over m}={(2+am)e^{-am/2}-2(1+2am)e^{-2am}\over 2am[1-e^{-a\Lambda}]},
$$
that is again a decreasing function of $x=a\Lambda$.

Summarizing, an upper limit in the richness allowed  in a random market provokes an enlargement 
of the middle class and also an enrichment of such a middle class.


\section{Slightly perturbed gas-like models}

Here, we put in evidence that the asymptotic equilibrium distributions for these models are not stable
under slight perturbations. It means that even an small modification, that conserves the mean value 
and the total wealth of the system, provokes the loss of the exponential distribution as a fixed point 
of the perturbed model equations. This fact also has its consequences on the behavior of the
entropy of the system, concretely the $H$-theorem is not already verified, unless new forms for the $H$
functional are introduced.

As an example, define the modified operator
$$
(T_Ky)(x)=\int\int_{u+v\ge x}K(u,v,x){y(u)y(v)\over u+v}dudv, 
$$
with the kernel
$$
K(u,v,x)=\sum_{n=0}^N(n+1)a_n\left({x\over u+v}\right)^n,
$$
where eventually $N$ may be infinity. It is straightforward to check that
$$
\vert\vert Ty\vert\vert=\vert\vert y\vert\vert^2\sum_{n=0}^N a_n
$$
and
$$
<x>_{T_Ky}=2<x>_y\sum_{n=0}^N {n+1\over n+2}a_n.
$$
Therefore, the operator verifies $T_K: B\to B$ and conserves the wealth when

\begin{eqnarray*}
\sum_{n=0}^N a_n= & 1, \\ 
\sum_{n=0}^N {n+1\over n+2}a_n= & {1\over 2}. 
\end{eqnarray*}

For $N=1$ the unique solution of this system is $a_0=1$ and $a_1=0$, that is, the well known operator studied elsewhere. 
For $N=2$ we have an infinity of solutions parametrized with $\epsilon\in R$: $a_0=1-\epsilon/3$, $a_1=\epsilon$ and 
$a_2=-2\epsilon/3$. That is,
$$
K(u,v,x)=1-{\epsilon\over 3}+{2\epsilon x\over u+v}-{2\epsilon x^2\over (u+v)^2}.
$$

If we take the exponential distribution $y(x)=a e^{-ax}$ and $N=2$ we find that
$$
(T_Ky)(x)=\left(1-{\epsilon\over 3}\right)ae^{-ax}-2\epsilon xa^2 Ei(-ax)-2\epsilon x^2a^2\left[{e^{-ax}\over x}+a Ei(-ax)\right],
$$
where $Ei(x)$ is the Exponential Integral,
$$
Ei(-ax)=-\int_x^\infty{e^{-at}\over t}dt.
$$
It means that $y(x)=a e^{-ax}$ is not the fixed point of the perturbed operator $T_K$ and a new asymptotic equilibrium emerges for
this $\epsilon$-slightly modified system.

\section{Conclusions}

Different versions of a continuous economic model \cite{lopezruiz2011} that takes 
into account idealistic characteristics of the markets have been considered. In these models,
the agents interact by pairs and exchange their money in a random way.
The asymptotic steady state of these models is the exponential wealth distribution.
The system decays to this final distribution with a monotonic increasing of the entropy
taking its maximum value just on the equilibrium. These are specific $H$-theorems
that can be computationally checked, independently on the effectiveness of the markets
or the limitation of the richness in the economic system. Also, it has been argued that
slight modifications of these of models provoke the loss of the exponential distribution
as an asymptotic equilibrium and its correspondent consequences for the establishment 
of an $H$-theorem for the new perturbed models.

\end{document}